\newtheorem{proposition}{Proposition}[section]
\newtheorem{lemma}[proposition]{Lemma}
\theoremstyle{plain}
\newtheorem{remark}{Remark}[section]
\theoremstyle{definition}
\newcommand{\de}{\,\mathrm{d}}
\newcommand{\rn}[1]{{\mathbb{R}^#1}}
\newcommand{\x}{\mathbf{x}}
\newcommand{\X}{\mathbf{X}}
\newcommand{\cotb}[1]{T^*\mathbb{R}^#1}
\newcommand{\MA}{Monge-Ampère }
\newcommand{\MK}{Monge-Kantorovich }
\newcommand{\LR}{Lychagin-Rubtsov }
\newcommand{\KMW}{Kim-McCann-Warren }
\newcommand{\domain}{M}
\newcommand{\codomain}{{\bar M}}
\newcommand{\symplectic}{\omega}
\newcommand{\effective}{\alpha}
\title{A note on optimal transport and \MA geometry}
\author{R. D'Onofrio
\footnote{\textit{Dipartimento di Matematica e Applicazioni, Universit\`a di  Milano-Bicocca, Via Roberto Cozzi 55, I-20125 Milano, r.donofrio1@campus.unimib.it\\
Department of Mathematics, University of Surrey, Guildford GU2 7XH, UK, r.d'onofrio@surrey.ac.uk\\
INFN, Sezione di Milano-Bicocca, Piazza della Scienza 3, 20126 Milano, Italy}}
}
\date{\today}
\begin{document}

\maketitle

\begin{abstract}
We identify a novel connection between a recently introduced pseudo-Riemannian framework for optimal mass transport and the geometry of \MA equations.
We show this correspondence by application to an example from geophysical fluid dynamics.
\end{abstract}

\section{Introduction}
In recent years, differential geometry has emerged as a key aspect in both optimal mass transport \cite{Del2008,KMW2010} and \MA equations \cite{Ban2002,KLR2006}.
In particular, Kim, McCann and Warren \cite{KMW2010} have applied ideas from pseudo-Riemannian geometry to investigate the geometrical properties of optimal maps, while Lychagin and Roubtsov \cite{LR83} 
have used pseudo-Riemannian geometry to classify symplectic \MA equations in three independent variables.

In this short account, we identify a close correspondence between these two mathematical frameworks in the guise of a conformal relation between two pseudo-Riemannian metrics. Moreover, we discuss an example from geophysical fluid dynamics which motivates our assumptions.

\section{Pseudo-Riemannian geometry in optimal mass transport and \MA equations}


Let us start by reviewing the fundamental relations between optimal mass transport and \MA equations (see for example \cite{KMW2010} and the literature cited therein).
Given two $n$-dimensional manifolds $\domain$ and $\codomain$, equipped with probability densities $\rho(x)$ and $\bar \rho(\bar x)$, and a smooth function $c(x,\bar x):\domain\times \codomain\to\mathbb R$, the \MK problem is to find a joint probability density $\gamma(x,\bar x)$ (called the transportation plan) which minimises the total cost,
\begin{equation}
\label{MK problem}
	\mathcal I[\gamma]=\int_{\domain\times \codomain} c(x,\bar x)\gamma(x,\bar x)d x d \bar x,
\end{equation}
and conserves mass. The conservation of mass is expressed by the constraints 
\begin{equation}
\label{marginals}
\int_\domain \gamma(x,\bar x)d x=\bar \rho(\bar x) \quad \textnormal{and} \quad \int_\codomain \gamma(x,\bar x)d \bar x=\rho(x).
\end{equation}
The \MK problem has a dual formulation which is to find a pair of potentials $(u(x),\bar u(\bar x))$ which maximise
\begin{equation}
\label{dual MK problem}
	\mathcal J[u,\bar u]=-\int_\domain u(x)\rho(x)dx-\int_\codomain \bar u(\bar x)\bar\rho(\bar x)d\bar x,
\end{equation}
subject to
\begin{equation}
\label{constraint potentials}
	u(x)+\bar u(\bar x)\ge -c(x,\bar x).
\end{equation}
The two formulations of the \MK problem are dual of each other in the sense that
\begin{equation}
\inf_{\gamma\textnormal{ satisfying }(\ref{marginals})}\mathcal I[\gamma]=\sup_{u,\bar u\textnormal{ satisfying }(\ref{constraint potentials})} \mathcal J[u,\bar u].
\end{equation}
Under standard hypothesis on the cost function (hypothesis A1 and A2 of \cite{MTW2005}), it can be proved that a maximising pair $(u,\bar u)$ always exists and is unique up to a constant. The same hypothesis also imply that the optimal measure $\gamma$ is supported on the graph of a map $T_u:\domain\to\codomain$ called the transportation map.
The optimal potential $u(x)$ and the transportation map $T_u(x)$ are found from the Euler-Lagrange system
\begin{equation}
	\begin{cases}
	\label{E-L}
	Du(x)=-D_x c(x,T_u(x)),\\
	\bar\rho(T_u(x))\det(D^2u(x)+D_x^2c(x,T_u(x)))=\rho(x)\det(-D_xD_{\bar x}c(x,T_u(x))).
	\end{cases}
\end{equation}
The optimal map is always invertible, and its inverse $T_{\bar u}(\bar x)$ along with the optimal potential $\bar u(\bar x)$ solve a system of equations analogous to (\ref{E-L}).

\begin{remark}
When $M=\bar M=\rn{n}$, an important particular case is represented by the cost function
\begin{equation}
\label{simple cost}
	c(x,\bar x)=-x\cdot\bar x,
\end{equation}
for which equations (\ref{E-L}) take the form
\begin{equation}
\begin{cases}
\label{simple E-L}
	\bar x=Du(x),\\
	\bar\rho(\bar x)\det(D^2u)=\rho(x).
\end{cases}
\end{equation}
\end{remark}

\begin{remark}
The optimal potentials satisfy
\begin{equation}
\label{c-duality}
u(x)+\bar u(\bar x)=-c(x,\bar x).
\end{equation}
Equation (\ref{c-duality}) can be interpreted as a generalised duality relation between the optimal potentials $u(x)$ and $\bar u(\bar x)$ in the terminology of Sewell \cite{Sew2002}.
For the cost (\ref{simple cost}), equation (\ref{c-duality}) takes the familiar form $u(x)+\bar u(\bar x)=x\cdot{\bar x}$ and the optimal potentials are Legendre dual of each other.
\end{remark}

\subsection{Geometry of optimal transport}

With the aim of characterising the geometrical properties of optimal maps, Kim \textit{et al.} \cite{KMW2010} equipped the product $\domain\times\codomain$ with a symplectic form,
\begin{equation}
\label{omega c}
	\omega_c=\begin{pmatrix}0 & -D_xD_{\bar x}c\\ (D_xD_{\bar x}c)^T & 0\end{pmatrix},
\end{equation}
and a pseudo-Riemannian metric of type $(n,n)$,
\begin{equation}
\label{KMW metric}
	h_c^{\rho,\bar \rho}=\left(\frac{\rho(x)\bar\rho(\bar x)}{|\det(D_xD_{\bar x}c)|}\right)^{1/n}
	\begin{pmatrix}
	0 & -D_xD_{\bar x}c\\
	-(D_xD_{\bar x}c)^T & 0
	\end{pmatrix}.
\end{equation}
Then, they were able to show that the graph of the optimal map is a volume-maximising space-like Lagrangian submanifold of $\domain\times \codomain$.

\subsection{Geometry of \MA equations}

Geometry of symplectic \MA equations\footnote{The term symplectic here means that the equation's coefficients can only depend on the independent variables and the first derivatives of the dependent variable.} deals with the cotangent bundle to the manifold of independent variables equipped with a \textit{\MA structure}.
For equations (\ref{E-L}) or (\ref{simple E-L}), the manifold of independent variables is naturally identified with $M$.
 A \MA structure on $T^*\domain$ is a pair of differential forms  $(\symplectic,\effective)$ where $\symplectic$ is a symplectic 2-form and $\effective$ is an \textit{effective} $n$-form, that is, $\symplectic\wedge\effective=0$.
Each effective $n$-form $\effective$ on $T^*\domain$ identifies a unique symplectic \MA operator $\Delta_\effective:C^\infty(\domain)\to\Omega^n(\codomain)$, taking functions to volume forms on $\domain$, by
 \begin{equation}
 	\Delta_\effective(u):=(du)^*\effective.
 \end{equation}
The \MA equation corresponding to $\effective$ is $\Delta_\effective(u)=0$. Every smooth function $u\in \ker(\Delta_\effective)$ is called a classical solution of the \MA equation. A (generalised) solution is a Lagrangian submanifold $L\subset T^*\domain$ such that $\effective|_L=0$. 

The \MA structure induces a pseudo-Riemannian metric on $T^*M$. For $n=3$, it can be computed through the formula
\begin{equation}
\label{LR metric 3d}
    g_\effective(X_1,X_2)\frac{\symplectic\wedge\symplectic\wedge\symplectic}{3!}=\iota_{X_1} \effective\wedge \iota_{X_2}\effective\wedge \symplectic,
\end{equation}
where $\iota$ denotes the interior product and $X_1,X_2$ are two vector fields on $\cotb{3}$ (see \cite{BRR2016}).
This object was first introduced by Lychagin and Roubtsov \cite{LR83} in order to classify symplectic \MA equations in three independent variables, and later generalised to any dimension $n$ (see for example \cite{Rub2019}).

\subsection{Main results}
In order to compare the geometry of optimal transport and the geometry of \MA equations, we make the identification $T^*\domain\simeq \domain\times\codomain$. This is certainly possible if $M=\bar M=\rn{n}$, and at least locally for more general manifolds. Moreover, we equip this manifold with the symplectic form $\omega_c$ (cf. equation (\ref{omega c})). 
Then, we show that, for $c(x,\bar x)$ given by (\ref{simple cost}), the \KMW metric and the \LR metric are conformally equivalent.

\begin{proposition}
Let $M=\bar M=\rn{3}$ and the cost be given by (\ref{simple cost}). Then, the following conformal relation holds
\begin{equation}
\label{conformal relation}
    g_{\effective}=[\rho(x)\bar\rho(\bar x)]^{2/3}h_c^{\rho,\bar\rho}.
\end{equation}
\end{proposition}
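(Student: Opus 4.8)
The plan is to realise both metrics as explicit $6\times 6$ matrices in the coordinates $(x_1,x_2,x_3,\bar x_1,\bar x_2,\bar x_3)$ on $\cotb{3}\simeq\domain\times\codomain$, where the fibre coordinates $\bar x_i$ play the role of the momenta conjugate to $x_i$ (consistent with $\bar x=Du(x)$), and then simply read off the conformal factor. The first and most delicate step is to pin down the effective form $\effective$ attached to the \MA equation in the second line of (\ref{simple E-L}). I would take
\[
\effective=\bar\rho(\bar x)\,d\bar x_1\wedge d\bar x_2\wedge d\bar x_3-\rho(x)\,dx_1\wedge dx_2\wedge dx_3,
\]
and verify two things: that $\effective$ is effective with respect to $\omega_c$, i.e.\ $\omega_c\wedge\effective=0$, which is immediate since wedging either volume factor with $\omega_c=\sum_i dx_i\wedge d\bar x_i$ always repeats a differential; and that $(du)^*\effective=\big[\bar\rho(Du)\det(D^2u)-\rho\big]\,dx_1\wedge dx_2\wedge dx_3$, so that $\Delta_\effective(u)=0$ is exactly the \MA equation. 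I would stress that the weighting of $\effective$ is not a free convention here: formula (\ref{LR metric 3d}) is quadratic in $\effective$, so the balanced choice of unit coefficients on the two volume forms (with factors $\bar\rho$ and $\rho$ respectively) is precisely what produces the exponent $2/3$; rescaling $\effective$ by a function would change the conformal factor.

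Next I would substitute the simple cost (\ref{simple cost}) into (\ref{omega c}) and (\ref{KMW metric}). Since $D_xD_{\bar x}c=-I$, one has $|\det(D_xD_{\bar x}c)|=1$, so $\omega_c=\sum_i dx_i\wedge d\bar x_i$ is the canonical symplectic form and the \KMW metric collapses to
\[
h_c^{\rho,\bar\rho}=(\rho\bar\rho)^{1/3}\begin{pmatrix}0 & I\\ I & 0\end{pmatrix}.
\]
It then remains to compute $g_{\effective}$ from (\ref{LR metric 3d}). I would first record $\tfrac{1}{3!}\,\omega_c\wedge\omega_c\wedge\omega_c=\pm\,dx_1\wedge dx_2\wedge dx_3\wedge d\bar x_1\wedge d\bar x_2\wedge d\bar x_3$ (a sign I would track carefully), and then evaluate the right-hand side of (\ref{LR metric 3d}) on the coordinate frame $\{\partial_{x_k},\partial_{\bar x_k}\}$. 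Here a dimension count does most of the work: $\iota_{\partial_{x_k}}\effective$ is proportional to a $2$-form in the $dx_i$ alone and $\iota_{\partial_{\bar x_k}}\effective$ to a $2$-form in the $d\bar x_i$ alone, so the $xx$- and $\bar x\bar x$-blocks vanish (a product of two $2$-forms in only three one-forms is a $4$-form, hence zero), while in a mixed pairing only the single summand $dx_k\wedge d\bar x_k$ of $\omega_c$ can complete a top form, forcing the result to be diagonal in $k$.

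Carrying out the mixed computation yields $g_{\effective}(\partial_{x_k},\partial_{\bar x_l})=\rho\bar\rho\,\delta_{kl}$, that is
\[
g_{\effective}=\rho\bar\rho\begin{pmatrix}0 & I\\ I & 0\end{pmatrix}.
\]
Comparing the two matrices gives $g_{\effective}=(\rho\bar\rho)^{2/3}\,h_c^{\rho,\bar\rho}$, which is (\ref{conformal relation}); since $\rho,\bar\rho>0$ the factor $(\rho\bar\rho)^{2/3}$ is a bona fide positive conformal weight. The genuinely fiddly part is the bookkeeping of orientations and signs in the pairing (\ref{LR metric 3d}) — both in $\tfrac{1}{3!}\,\omega_c^{\wedge 3}$ and in reordering the top forms coming out of the interior products — which must be handled consistently so that the two sides carry the same sign; the identification and normalisation of $\effective$ is the only conceptual step, everything after it being a finite linear-algebra verification.
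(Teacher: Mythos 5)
Your proposal is correct and follows essentially the same route as the paper: identify the effective form $\effective=\bar\rho\,d\bar x^1\wedge d\bar x^2\wedge d\bar x^3-\rho\,dx^1\wedge dx^2\wedge dx^3$, compute $g_\effective$ from (\ref{LR metric 3d}) with $\omega_c=dx\wedge d\bar x$ to get $2\rho\bar\rho\,dx\,d\bar x$, compute $h_c^{\rho,\bar\rho}=2(\rho\bar\rho)^{1/3}dx\,d\bar x$ from (\ref{KMW metric}), and compare. The paper's proof is simply a terser version of yours; your added checks (effectivity of $\effective$, that $(du)^*\effective$ recovers the \MA equation, and the sign bookkeeping) are details the paper leaves implicit.
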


\begin{proof}[Proof]
First observe that the effective form $\alpha$ corresponding to the \MA equation $(\ref{simple E-L})_2$ is, in coordinates $(x,\bar x)$,
\begin{equation}
	\effective=\bar\rho(\bar x)d\bar x^1\wedge\de\bar x^2\wedge d\bar x^3-\rho(x)d x^1\wedge d x^2\wedge d x^3.
\end{equation}
Therefore, formula (\ref{LR metric 3d}) yields, by also setting $\omega=\omega_c=dx\wedge d\bar x$,
\begin{equation}
\label{LR metric}
	g_\effective=2\rho(x)\bar\rho(\bar x)d x d\bar x.
\end{equation}
On the other hand, using (\ref{simple cost}) in equation (\ref{KMW metric}) gives
\begin{equation}
	h_c^{\rho,\bar\rho}=2(\rho(x)\bar\rho(\bar x))^{1/3}d x d\bar x.
\end{equation}
Then, (\ref{conformal relation}) follows.
\end{proof}

\begin{remark}
Under the hypothesis of Proposition 2.1, $(x,\bar x)$ are canonical coordinates on $T^*\domain\simeq \domain\times\codomain$, that is, $\omega_c=dx\wedge d\bar x$. This is generally not true for more general cost functions.
\end{remark}

\section{Example: semigeostrophic theory}
Several applications of optimal transport in semigeostrophic theory are lucidly presented in \cite{Cul2021}. Here, we summarise the main facts about $f$-plane semigeostrophic theory as an example where the results of Section 2.3 apply.

In the $f$-plane approximation, the state of each fluid particle is characterised by its position $\x=(x,y,z)$ and a triple $\X=(X,Y,Z)$, where $X$ and $Y$ represent a normalised expression for the horizontal absolute momentum and $Z$ represents a normalised potential temperature. The total energy of the fluid occupying a domain $M\subset\rn{3}$ at some fixed time $t$ is expressible as
\begin{equation}
\label{energy bis}
    E=f^2\int_\domain\bigg[\frac{(x-X)^2}{2}+\frac{(y-Y)^2}{2}-zZ\bigg]\rho(\x)d\x,
\end{equation}
where $f$ is a physical constant known as the Coriolis parameter and $\rho(\x)$ represents the fluid density.
The set of all the $\X$-values of particles at time $t$ form another subdomain of $\rn{3}$, which we denote $\codomain$. 

Semigeostrophic flows obey a minimum energy principle \cite{Cul2021}, which can be cast as a \MK problem where the cost function is the integrand in (\ref{energy bis}) and $\rho(\x)$ and $\bar \rho(\X)$ are some given probability densities on $\domain$ and $\codomain$ respectively. 
Equations (\ref{E-L}) in this case take the form
\begin{equation}
\label{E-L SG}
	\begin{cases}D u=\X-(x,y,0),\\ \bar\rho(\X)\det(D^2u+D^2_xc)=\rho(\x), \end{cases}
\end{equation}
where $D_x^2c$ is a constant diagonal matrix with entries $(1,1,0)$. Although equations (\ref{E-L SG}) may look different from (\ref{simple E-L}), the results of Proposition 2.1 still apply. More precisely,

\begin{proposition}
Let the cost be given by the integrand in (\ref{energy bis}). Then, $(\x,\X)$ are canonical coordinates on $T^*\domain\simeq \domain\times\codomain$ equipped with the symplectic form (\ref{omega c}), and the conformal relation (\ref{conformal relation}) holds.
\end{proposition}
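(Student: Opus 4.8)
The plan is to reduce the statement to Proposition 2.1 by showing that $(\x,\X)$ are canonical coordinates, after which both the \KMW and \LR metrics take precisely the forms already computed there. Since the only datum of the cost entering $\symplectic_c$ and the two metrics is the mixed Hessian $D_{\x}D_{\X}c$, the first step is to compute it.

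First I would differentiate the cost $c(\x,\X)=\frac{(x-X)^2}{2}+\frac{(y-Y)^2}{2}-zZ$ (the overall constant $f^2$ does not affect the minimiser and may be normalised to $1$). The pure quadratic pieces contribute only to $D_{\x}^2 c=\mathrm{diag}(1,1,0)$, whereas the mixed derivative picks up $-1$ on each diagonal entry, so that $D_{\x}D_{\X}c=-I$. Substituting into (\ref{omega c}) gives $\symplectic_c=d\x\wedge d\X$, establishing that $(\x,\X)$ are canonical. This is the crucial structural point: although the diagonal quadratic terms modify both the gradient relation $(\ref{E-L SG})_1$ and the Hessian appearing in $(\ref{E-L SG})_2$, they are invisible to $D_{\x}D_{\X}c$ and hence leave the symplectic form identical to the simple-cost case $c=-\x\cdot\X$ of Proposition 2.1.

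Next I would identify the effective form. In the canonical chart $(\x,\X)$ the generalised solution is the graph of $(\ref{E-L SG})_1$, i.e. the graph of the map $\x\mapsto\X$ whose Jacobian is exactly $D^2u+D_{\x}^2 c$; hence $(\ref{E-L SG})_2$ is the push-forward condition $\bar\rho(\X)\det(\partial\X/\partial\x)=\rho(\x)$, and the associated effective $3$-form is
\begin{equation}
\effective=\bar\rho(\X)\,dX\wedge dY\wedge dZ-\rho(\x)\,dx\wedge dy\wedge dz,
\end{equation}
exactly as in Proposition 2.1. Effectiveness $\effective\wedge\symplectic_c=0$ is immediate, since wedging either volume term with $\symplectic_c=d\x\wedge d\X$ repeats a differential.

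Finally, because the effective form and the symplectic form in the coordinates $(\x,\X)$ coincide with those of Proposition 2.1, and $|\det(D_{\x}D_{\X}c)|=1$, formula (\ref{LR metric 3d}) yields $g_\effective=2\rho(\x)\bar\rho(\X)\,d\x\,d\X$ while (\ref{KMW metric}) yields $h_c^{\rho,\bar\rho}=2(\rho(\x)\bar\rho(\X))^{1/3}\,d\x\,d\X$, and the conformal relation (\ref{conformal relation}) follows. I expect the main obstacle to be conceptual rather than computational: one must recognise that the additional diagonal curvature of the semigeostrophic cost does not enter $D_{\x}D_{\X}c$, so that the whole problem collapses onto the calculation already carried out for the simple cost.
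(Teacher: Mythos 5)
Your proposal is correct and follows essentially the same route as the paper: compute $D_{\x}D_{\X}c=-I$ so that $\omega_c=d\x\wedge d\X$ makes $(\x,\X)$ canonical, read off the effective form $\effective=\bar\rho(\X)\,dX\wedge dY\wedge dZ-\rho(\x)\,dx\wedge dy\wedge dz$ from $(\ref{E-L SG})_2$, and then reproduce the metric computations of Proposition 2.1 to get $g_\effective=2\rho\bar\rho\,d\x\,d\X$ and $h_c^{\rho,\bar\rho}=2(\rho\bar\rho)^{1/3}\,d\x\,d\X$. Your added observations (normalising $f^2$, the push-forward interpretation of the graph Jacobian $D^2u+D^2_{\x}c$, and the explicit effectiveness check $\effective\wedge\omega_c=0$) are sound elaborations of steps the paper states without comment.
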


\begin{proof}
The first point easily follows from (\ref{omega c}) by setting $c=(x-X)^2/2+(y-Y)^2/2-zZ$. For the second point, observe that the effective 3-form corresponding to the \MA equation $(\ref{E-L SG})_2$ is
\begin{equation}
\alpha=\bar \rho(\X)dX\wedge dY\wedge dZ-\rho(\x)dx\wedge dy\wedge dz.
\end{equation}
Then, formula (\ref{LR metric 3d}) yields, by also setting $\omega=\omega_c=d\x\wedge d\X$,
\begin{equation}
g_\alpha=2\rho(\x)\bar \rho(\X)(dxdX+dydY+dzdZ).
\end{equation}
On the other hand, equation (\ref{KMW metric}) gives
\begin{equation}
h_c^{\rho,\bar\rho}=2\rho(\x)^{1/3}\bar\rho(\X)^{1/3}(dxdX+dydY+dzdZ).
\end{equation}
Therefore, equation (\ref{conformal relation}) is verified.
\end{proof}

\section{Conclusions and future directions}
We have shown a conformal relation between the \KMW metric and the \LR metric in dimension $n=3$, when the transportation cost is (\ref{simple cost}). Relaxing any of these hypothesis leads to possible directions for future investigation. In particular, the work of Cullen \textit{et al.} on semigeostrophic theory on a sphere is an example where the cost function takes a form different from (\ref{simple cost}).

\section*{Acknowledgements}
The author would like to thank G. Ortenzi, I. Roulstone and V. Roubtsov for their constant support and useful advises. The author is also grateful to T. Bridges, L. Napper and M. Wolf for useful discussions. A special thanks goes to M. Cullen, who introduced the author to the \MK problem in semigeostrophic theory.
This work was supported by the European Union’s Horizon 2020 research and innovation program under the Marie Skłodowska-Curie grant no 778010 IPaDEGAN. The author thanks the financial support of the MMNLP (Mathematical Methods in Non Linear Physics) project of the INFN.
The author also gratefully acknowledge the auspices of the GNFM Section of INdAM under which part of this work was carried out. This work is part of the author's dual PhD program Bicocca-Surrey.



\end{document}